\DeclareSymbolFont{matha}{OML}{Txmi}{m}{it}
\DeclareMathSymbol{\varv}{\mathord}{matha}{118}
\begin{document}
	\title{Ergodic Capacity of IRS-Assisted  MIMO Systems with Correlation and Practical Phase-Shift Modeling}
	\author{Anastasios Papazafeiropoulos \thanks{A. Papazafeiropoulos is with the Communications and Intelligent Systems Research Group, University of Hertfordshire, Hatfield AL10 9AB, U. K., and with the SnT at the University of Luxembourg, Luxembourg. This work was supported by the University of Hertfordshire’s 5-year Vice Chancellor’s Research Fellowship and by the National Research Fund, 	Luxembourg, under the project RISOTTI. E-mail: tapapazaf@gmail.com}}
	\maketitle
	\makeatletter

	\pagestyle{empty}

	\vspace{-2.9cm}
	
	\begin{abstract}
		We focus on the maximization of the exact ergodic capacity (EC)		of a point-to-point multiple-input multiple-output (MIMO) system assisted by an intelligent reflecting surface (IRS). In addition, we  account for the effects of correlated Rayleigh fading and the intertwinement between the amplitude and the phase shift of the reflecting coefficient of each IRS element, which are usually both  neglected despite their presence in practice. Random matrix theory tools allow to derive the  probability density function (PDF) of the cascaded channel in closed form, and subsequently, the EC, which depend only on the large-scale statistics  and the phase shifts. Notably, we optimize  the EC with respect to the phase shifts with low overhead, i.e., once per several coherence intervals instead of the burden of frequent necessary optimization required by expressions being dependent on instantaneous channel information. Monte-Carlo (MC) simulations verify the analytical results and demonstrate the insightful interplay among the key parameters and their impact on the EC.
	\end{abstract}
	
	\begin{keywords}
		Intelligent reflecting surface (IRS), ergodic capacity, MIMO communication, correlated Rayleigh fading, beyond 5G networks.
	\end{keywords}
	\section{Introduction}
	Intelligent reflecting surface (IRS) has emerged as a promising green and cost-effective  solution towards the sustainable growth of next-generation wireless networks \cite{Basar2019}. An IRS is a planar meta-surface, which consists of a large number of nearly passive elements that are managed by a smart controller. The role of each element is the induction of an independent phase shift and/or amplitude attenuation to the impinging electromagnetic waves. In \cite{Abeywickrama2020}, it was shown that the   amplitude and phase response are intertwined, while this was neglected by most works. The proper design of the reflecting coefficients enhances the performance by constructive and destructive addition of the signals reflected by IRS to increase the desired signal power and mitigate the co-channel interference, respectively.
	
	The study of IRS performance gains has attracted significant attention under various application scenarios and objectives \cite{Wu2019a, Elbir2020,Papazafeiropoulos2021a,Papazafeiropoulos2021,Wang2021}. For instance, in \cite{Wu2019a},  a minimization of the total transmit power took place by optimizing the transmit and reflecting beamforming (RB). Also, in \cite{Papazafeiropoulos2021a}, we suggested the use of multiple distributed IRSs and compared two practical scenarios, namely, a large number of finite size IRSs and a finite number of large IRSs to show which implementation scenario is more advantageous. Moreover, in \cite{Papazafeiropoulos2021}, we studied the impact of hardware impairments at both the transceiver and IRS sides in a general IRS-assisted multi-user multiple-input single-output (MISO) system with correlated Rayleigh fading, which is usually neglected in relevant works (e.g., see \cite{Basar2019,Wu2019a}), although it should be taken into account \cite{Bjoernson2020}.  Furthermore, in \cite{Wang2021}, the assumptions required to integrate massive multiple-input multiple-output (MIMO) systems with IRS were studied and the achievable rate was obtained. 
	
	In this direction, most existing works on IRS-assisted systems have focused on   single-input single-output (SISO) or MISO systems, while limited research has been devoted to MIMO communication, e.g., see \cite{Zhang2020a,Mu2021,Zhang2021}. In particular, regarding the study of capacity,  the capacity limit of point-to-point  MIMO IRS-assisted systems was studied in \cite{Zhang2020a} and the capacity region  with non-orthogonal multiple access (NOMA) was characterized in \cite{Mu2021}. In \cite{Zhang2021}, the ergodic capacity (EC) was studied for SISO channels.  Moreover, in\cite{Guo2019}, the EC for millimeter wave (mmWave) MIMO systems was investigated but not obtained in closed-form. Similarly, in 	\cite{Yang2020a,Wang2021a},  only upper bounds for the EC were derived. 

	The previous observations motivate the topic of this paper, which is the derivation of the exact  EC of MIMO IRS-assisted systems  for Rayleigh channels under practical considerations while performing robust optimization. Contrary to \cite{Zhang2020a,Mu2021,Zhang2021}, we have  accounted for both correlated fading and the intertwinement between  the amplitude and phase-shift of each element, which have been disregarded in most previous works. Also, \cite{Zhang2021} considered only SISO channels, while we focus on MIMO channels. Furthermore, \cite{Guo2019} focused on mmWave systems and \cite{Yang2020a,Wang2021a} derived just  upper bounds, while these  works did not focus on the exact analysis of the EC. To this end, we have achieved to obtain  the probability density function (PDF) of an IRS-assisted MIMO channel with correlation in a simple closed-form expression,  and we have optimized the EC by considering a practical phase-shift model. Moreover, the optimization in \cite{Zhang2020a,Mu2021} relied on instantaneous channel state information (CSI), while our analytical expression depends only on large-scale statistics as in \cite{Papazafeiropoulos2021a,Papazafeiropoulos2021,Wang2021}, and thus, it can be optimized once per several coherence intervals. Hence, our approach is of great  value since it achieves to reduce considerably the signal overhead, which becomes prohibitive as the number of IRS elements increases. Finally, we shed light on the impact of the fundamental system parameters on the EC. For example, we show its degradation due to correlated Rayleigh fading.
	
	%


	\section{System Model}	
	We consider a MIMO communication system assisted by a  two-dimensional rectangular IRS dynamically adjusted by the IRS controller, where the transmitter  and the receiver  have $ M $ and $ K $, antennas, respectively, as shown in Fig. \ref{Fig0}. The  IRS,  equipped with $ N =N_{\mathrm{H}}N_{\mathrm{V}}$ nearly passive reflecting elements, enables the communication  between the transmitter and the receiver, where  $ N_{\mathrm{H}} $ and  $ N_{\mathrm{V}} $ denote the elements per row   and per column of the IRS. Each element has size $ d_{\mathrm{H}} d_{\mathrm{V}}$, where $ d_{\mathrm{H}} $ is
	its horizontal width and $ d_{\mathrm{V}} $ is the vertical height.
	
	\begin{figure}[!h]
		\begin{center}
			\includegraphics[width=0.85\linewidth]{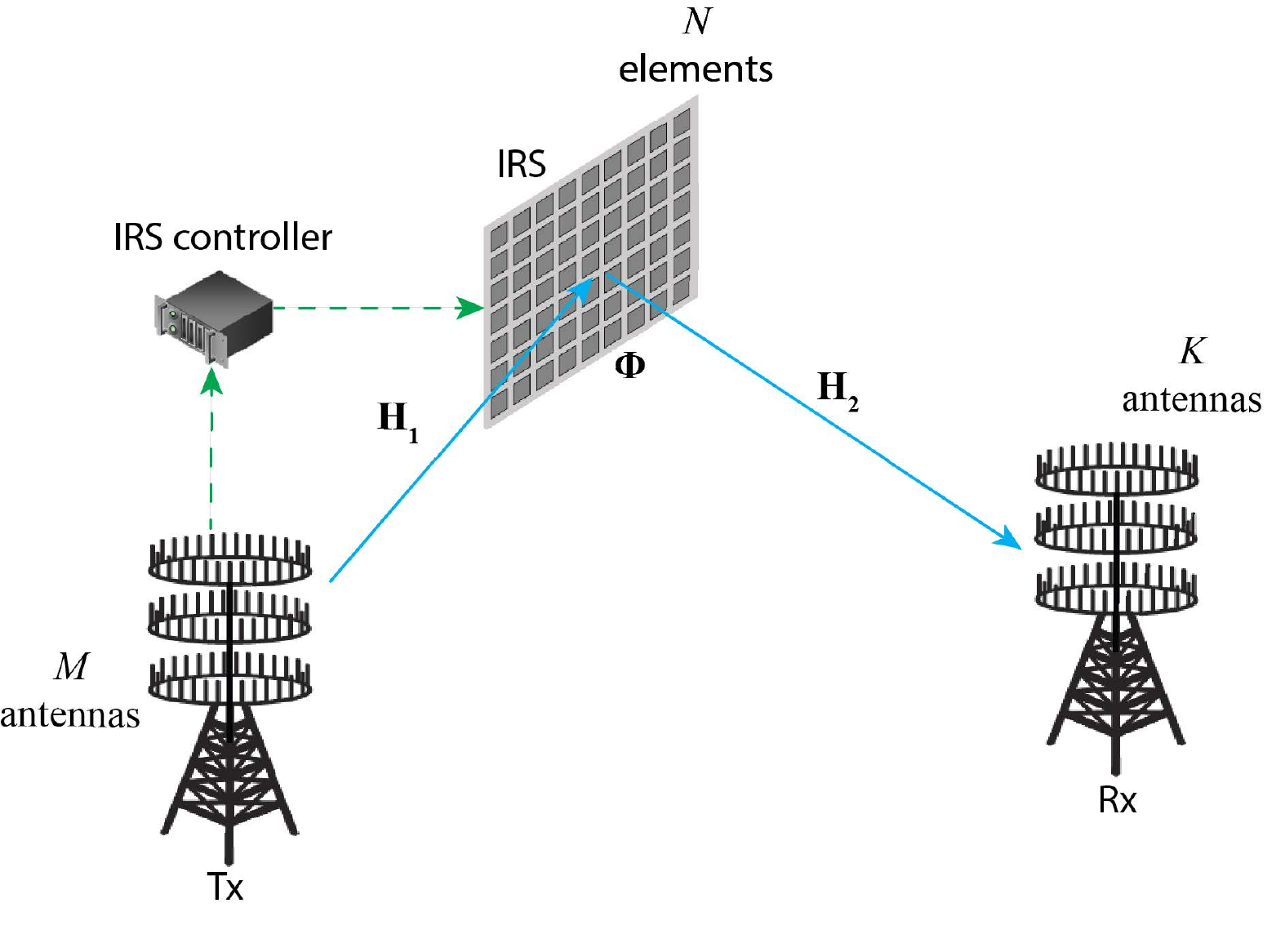}
			\caption{\footnotesize{An IRS-assisted MIMO communication system with $ M $  antennas at the transmitter, $ N $ IRS elements, and $ K $  antennas at the receiver. }}
			\label{Fig0}
		\end{center}
	\end{figure}

	We assume narrowband transmission on  quasi-static block-fading channels, where in each block all the channels remain constant. In particular, let 
	$ \bH_{1}=[\bh_{1,1}\ldots,\bh_{1,M} ] \in \mathbb{C}^{N \times M}$    be the  channel matrix between  the transmitter and the IRS with $ \bh_{1,i} \in \mathbb{C}^{N \times 1}$ for $ i=1,\ldots,M $ being its column vectors. Similarly, $ \bH_{2} \in \mathbb{C}^{K \times N}$ expresses the channel matrix between the IRS and the receiver. The subscripts $ 1$ and $ 2$ correspond to the transmitter-IRS and IRS-receiver  links, respectively. Despite that many  works, e.g., \cite{Wu2019a}, assumed independent Rayleigh fading model, we take into account the spatial correlations of all  nodes, which are unavoidable in practice and affect the performance \cite{Alfano2004,Bjoernson2020}. Especially,  in \cite{Bjoernson2020}, a correlation model, being suitable for IRS, was presented. 
	
	By relying on the Kronecker model to account for the  spatial correlation of the MIMO channel for each link, the  correlated Rayleigh fading distribution for the $ i $th link can be written as
	\begin{align}
		\bH_{i}&=\bR_{i}^{\frac{1}{2}}\bX_{i}\bT_{i}^{\frac{1}{2}},~\mathrm{for}~i=1,2\label{channel}
	\end{align}
	where $ \bR_{1} \in \mathbb{C}^{N \times N} $, $ \bT_{1} \in \mathbb{C}^{M \times M} $ and $ \bR_{2} \in \mathbb{C}^{K \times K} $, $ \bT_{2} \in \mathbb{C}^{N \times N} $ describe the deterministic nonnegative semi-definite correlation
	matrices at the receive, transmit side of link $ 1 $ and $ 2 $, respectively. Basically, $  \bR_{1}$ and $ \bT_{2} $, corresponding to the IRS, are identical. Notably, all the correlation matrices, assumed to be known by the network, can be obtained through  existing estimation methods (see e.g., \cite{Neumann2018}). For the IRS, we adopt the suitable correlation model presented in \cite{Bjoernson2020}, while the  correlations at the transmitter and the receiver correspond to conventional linear antenna array  modeling (see e.g., \cite{Papazafeiropoulos2021}).   Moreover, $ \bX_{1} \in \mathbb{C}^{N \times M}$ and $ \bX_{2} \in \mathbb{C}^{K \times N} $ express the corresponding fast-fading matrices with elements distributed as 	$  \mathcal{CN}\left(0,1\right) $. Also,   $ \beta_{1} $ and $ \beta_{2} $  denote the path-losses  of the transmitter-IRS  and IRS-receiver  links, respectively.

	Let  $\bTheta=\mathrm{diag}\left( \al_{1}e^{j \phi_{1}}, \ldots, \al_{N}e^{j \phi_{N}} \right)\in\mathbb{C}^{N\times N}$ denote the reflecting beamforming ({RBM) being diagonal and representing the response of the $ N $ IRS elements, where $ \phi_{n} \in [-\pi,\pi)$ and $ \al_{n}(\phi_{n})\in [0,1] $ express the phase-shift  and amplitude coefficient of element $ n $, respectively.  Contrary to many previous works that assumed  lossless metasurfaces ($ \al_{n}(\phi_{n})=1 $), we consider that the amplitude coefficient is phase-dependent  based on the load or surface impedances as mentioned in \cite{Abeywickrama2020}. Especially, we consider that the amplitude response  is explicitly expressed as a function of the phase shift as
		\begin{align}
			\al_{n}(\phi_{n})=(1-\kappa_{\min})\left(\frac{\sin(\phi_{n}-\vartheta )+1}{2}\right)^{\!\!\xi}+\kappa_{\min},\label{phaseShiftModel}
		\end{align}
		where $ \kappa_{\min}\ge 0 $ is the minimum amplitude, $ \vartheta \ge 0 $ is the horizontal distance between $ -\pi/2 $ and $ \kappa_{\min} $, which expresses the distance  between $ -\pi/2 $ and the phase shift $ \phi_{n} $ that minimizes $ \sin(\phi_{n}-\vartheta ) $ (i.e., $ \phi_{n}=-\pi/2+\vartheta $), and $ \xi $ is the steepness of the function curve. If $ \kappa_{\min}=1 $ or $ \xi=0 $, \eqref{phaseShiftModel} results in the ideal phase-shift model, i.e., $ 	\al_{n}(\phi_{n})=1 $.
		
		We assume that signals reflected by the IRS two or more times are ignored and that no direct signal exists due to blockage from obstacles such as buildings. Thus, the effective channel matrix is given by $\bH_{2}\bTheta \bH_{1} $. 
		The received signal vector $\by \in \mathbb{C}^{K\times 1} $ by  the receiver is given by
		\begin{align}
			\by=\bH_{2}\bTheta \bH_{1}\bx+\bz,\label{system1}
		\end{align}
		where  $  \bx$ is the transmit signal vector by  the transmitter and $\bz\sim \cC\cN(\b0, \sigma^{2} \bI_{M})$ is the additive white Gaussian noise (AWGN) vector at  the receiver with $ \sigma^{2} $ expressing the average noise power.
		\section{Ergodic Capacity Analysis}
		By assuming that  the transmitter has no channel knowledge while  the receiver  has perfect CSI of both $ \bH_{1} $ and $ \bH_{2} $\footnote{It is worthwhile to mention that by assuming receive RF chains with sensing abilities integrated 	into the IRS, the acquisition of the CSI of individual channels is feasible as suggested recently in \cite{Zheng2021}. On this ground, the  consideration of  the imperfect CSI scenario, which is more practical, is the topic of future work.}, the EC  in (b/s/Hz) of the MIMO-assisted system in \eqref{system1} is expressed as
		\begin{align}
			&	C=\EE\big[\log_{2}\det \big(\Id_{K}+\frac{	\mathrm{SNR}}{M }\bH_{2}\bTheta \bH_{1}\bQ\bH_{1}^{\H}\bTheta ^{\H}\bH_{2}^{\H}\big)\big],\label{capacity1}
		\end{align}
		where the expectation in \eqref{capacity1} is over $ \bH_{1} $, $ \bH_{2} $ while $ \bQ $ denotes the capacity-achieving input covariance, being normalized by its
		energy per dimension and given by
		\begin{align}
			\bQ=\frac{\EE[\bx\bx^{\H}]}{\frac{1}{M}\EE[\|\bx\|^{2}]}.
		\end{align}
		Note that the normalization guarantees that $ \EE[\tr(\bQ)]=M $. Also, we have
		\begin{align}
			\mathrm{SNR}=\frac{\EE[\|\bx\|^{2}]}{\frac{1}{K}\sigma^{2}}.
		\end{align}
		
		Substitution of the channel expressions described by \eqref{channel} into \eqref{capacity1} gives \eqref{capacity21}. Next, by denoting $ \bLambda_{\bR_{i}} $, $ \bLambda_{\bT_{i}} $, and $ \bP $ the diagonal eigenvalue matrices corresponding to the  matrices $\bR_{i}  $, $ \bT_{i} $, and $ \bQ $, the EC can be written as in \eqref{capacity22}. Note that $ \bP $ describes the capacity-achieving power allocation.
		\begin{figure*}
			\begin{align}
				C&=	\EE\big[\!\log_{2}\det \big(\Id_{K}+\frac{	\mathrm{SNR}}{M }\bR_{2}\bX_{2}\bT_{2}^{\frac{1}{2}}\bTheta\bR_{1}\bX_{1}\bT_{1}^{\frac{1}{2}} \bQ\bT_{1}^{\frac{1}{2}}\bX_{1}^{\H}\bR_{1}^{\frac{1}{2}}	\bTheta ^{\H}\bT_{2}^{\frac{1}{2}}\bX_{2}^{\H}\big)\big]\label{capacity21}\\
				&=\EE\big[\log_{2}\det \big(\Id_{K}+\frac{	\mathrm{SNR}}{M }\bLambda_{\bR_{2}}\bX_{2}\bLambda_{\bT_{2}}^{\frac{1}{2}}\bTheta\bLambda_{\bR_{1}}^{\frac{1}{2}}\bX_{1}\bLambda_{\bT_{1}} \bP\bX_{1}^{\H}\bLambda_{\bR_{1}}^{\frac{1}{2}}	\bTheta ^{\H}\bLambda_{\bT_{2}}^{\frac{1}{2}}\bX_{2}^{\H}\big)\big].\label{capacity22}
			\end{align}
			\hrulefill
		\end{figure*}

		It is generally known that    even the derivation of the PDF of the unordered eigenvalue of $ \bH_{i} \bH_{i}^{\H} $, where $ \bH_{i} $ is given as in \eqref{channel}, is a challenging problem that remains unsolved. In the   case of point-to-point MIMO channel, special cases of correlated fading are considered, where correlation is assumed only at either side of  the transmitter or  the receiver  \cite{Alfano2004}. Also, it is known that doubly correlated Rayleigh fading is not amenable to tractable manipulations \cite{Jin2010}. 
		
		Taking these facts into account, we notice that the capacity expression in \eqref{capacity2} includes two random matrices $ \bX_{1} $, $ \bX_{2} $ while double correlated Rayleigh channels are assumed. To tackle this difficulty, we assume correlation only at  the transmitter or the receiver side while we perform singular value decomposition (SVD) to the channel of the other link. Next, we obtain the conditional  PDF of the unordered eigenvalue of $ \bG \bG^{\H} $,  denoted by $ f_{\lambda|\bL} $, and finally, we derive the marginal PDF of the  unordered eigenvalue. The derivation requires to  consider cases, where either $ \bLambda_{\bR_{2}}=\Id_{N} $ or $ \bLambda_{\bT_{1}} \bP=\Id_{M} $ as in \cite{Alfano2004}. Note that,  due to
		the reciprocity of MIMO channel, correlation at either the receiver or transmitter is equivalent. The only difference is that the former case including $ \bLambda_{\bR_{2}} $ is determined completely by the receive correlation while the latter case depends on both the transmit correlation and the power allocation.

		\subsection{Main Results}	
		Let us start with  $ \bLambda_{\bR_{2}} =\Id_{N}$, then after applying the SVD to $ \bX_{1} $, i.e., $ \bX_{1} =\bU_{1}\bD_{1}\bV_{1}^{\H}$, where $ \bD_{1}=\diag\left(\lambda_{1},\ldots, \lambda_{\min(M,N)}\right) $  is a   diagonal matrix  with diagonal elements being the singular values in increasing order while the matrices $ \bU_{1} \in \mathbb{C}^{N \times N}$ and $ \bV_{1} \in \mathbb{C}^{M \times M}$ are unitary and  contain the corresponding eigenvectors, we obtain
		\begin{align}
			C&=\EE\big[\log_{2}\det \big(\Id_{K}+\frac{	\mathrm{SNR}}{M }
			\bX_{2}\bPsi\bX_{2}^{\H}\big) \big],\label{capacity2}
		\end{align}
		where
		\begin{align}
			\!\!\!\!	\!	\bPsi\!=\!\!
			\begin{cases}\diag\!\left(\gamma_{1}\!(\phi_{1})\lambda_{1}^{2}, \ldots, \gamma_{N}(\phi_{N})\lambda_{N}^{2}\right)&\!\!\!\!\!N\le M\\
				\diag\big(\!\gamma_{1}\!(\phi_{1})\lambda_{1}^{2}, \ldots, \gamma_{M}(\phi_{M})\lambda_{M}^{2},\underbrace{\!0,\ldots,0}_{N-M}\!\big)&\!\!\!\!\! N> M
			\end{cases} 
			\!\!\!	\!\label{PC11}
		\end{align}
		with  $ \gamma_{i}(\phi_{i})= p_{i}\al_{i}^{2}(\phi_{i}) \lambda_{\bT_{1},i}\lambda_{\bR_{1},i} \lambda_{\bT_{2},i} $.
		Note that we have taken into account the invariance of $ \bX_{2} $ under left and right unitary transformation and that to achieve capacity (optimal transmit strategy), the
		eigenvectors of $ \bQ $ must coincide with those of the transmit
		correlation $ \bT_{1}$ \cite{Alfano2004}.  Since the
	transmitter has no channel knowledge,  equal
power allocation to each transmit antenna is the most reasonable
strategy, i.e., choosing $ \bQ=\frac{P}{M} \Id_{M} $ (see \cite{Jin2010} and references therein). Hence, the  required EC optimization below concerns only the phase shifts.
More concretely, we can write
		\begin{align}
			C&=\EE\big[\log_{2}\det \big(\Id_{K}+\frac{	\mathrm{SNR}}{M }
			\hat{	\bX}_{2}\hat{\bL}\hat{\bX}_{2}^{\H}\big) \big],\label{capacity3}
		\end{align}
		where $ 	\hat{\bX}_{2} \sim\mathcal{CN}\left(\b0,\Id_{K}\otimes \Id_{q}\right)$ and $ \hat{\bL}= \diag\left(\gamma_{i}(\phi_{i})\lambda_{i}^{2}\right)_{i=1}^{q}$ with $ {q}=\min\left(M,N\right) $.

		Similarly, if  we set $ \bLambda_{\bT_{1}} \bP=\Id_{M} $  and apply the SVD  to $ \bX_{2} $, which means $ \bX_{2} =\bU_{2}\bD_{2}\bV_{2}^{\H}$, where $ \bD_{2}=\diag\left(\tilde{\lambda}_{1},\ldots, \tilde{\lambda}_{\min(K,N)}\right) $  is a   diagonal matrix  with diagonal elements being the singular values in increasing order while    $ \bU_{2} \in \mathbb{C}^{K \times K}$ and  $ \bV_{2} \in \mathbb{C}^{N \times N}$ are unitary matrices, we obtain
		\begin{align}
			C&=\EE\big[\log_{2}\det \big(\Id_{M}+\frac{	\mathrm{SNR}}{M }\bX_{1}^{\H}\tilde{\bPsi}\bX_{1}\big) \big],\label{capacity4}
		\end{align}
		where
		\begin{align}
			\!\!\!\!	\!	\tilde{\bPsi}\!=\!\!
			\begin{cases}\diag\!\left(\tilde{\gamma}_{1}\!(\phi_{1})\tilde{\lambda}_{1}^{2}, \ldots, \tilde{\gamma}_{N}(\phi_{N})\tilde{\lambda}_{N}^{2}\right)&\!\!\!\!\!\!N\le K\\
				\diag\big(\!\tilde{\gamma}_{1}\!(\phi_{1})\tilde{\lambda}_{1}^{2}, \ldots, \tilde{\gamma}_{K}(\phi_{K})\tilde{\lambda}_{K}^{2},\underbrace{\!0,\ldots,0}_{N-K}\!\big)&\!\!\!\!\! N> K
			\end{cases} 	\!\!\!	\!\label{PC12}
		\end{align}
		with  $ \tilde{\gamma}_{i}(\phi_{i})= p_{i}\al_{i}^{2}(\phi_{i}) \lambda_{\bR_{1},i}\lambda_{\bR_{2},i} \lambda_{\bT_{2},i} $ and the invariance of $ \bX_{1} $ considered under left and right unitary transformation. In a similar way to \eqref{capacity3}, the capacity can be described by
		\begin{align}
			C&=\EE\big[\log_{2}\det \big(\Id_{M}+\frac{	\mathrm{SNR}}{M }
			\tilde{\bX}_{1}^{\H}\tilde{\bL}\tilde{\bX}_{1}\big) \big],\label{capacity5}
		\end{align}
		where $ 	\tilde{\bX}_{1} \sim\mathcal{CN}\left(\b0,\Id_{M}\otimes \Id_{\tilde{q}}\right)$ and $\tilde{ \bL}= \diag\big(\tilde{\gamma}_{i}(\phi_{i})\tilde{\lambda}_{i}^{2}\big)_{i=1}^{\tilde{q}}$ with $ \tilde{q}=\min\left(K,N\right) $.
		
		We observe that \eqref{capacity3} and \eqref{capacity5} have a similar expression. Hence, it is sufficient to  compute  one of them, e.g., \eqref{capacity3}. We can rewrite it in an equivalent way as
		\begin{align}
			\hat{C}&={s}\!\int_{0}^{\infty}\!\log_{2}\big(1+\frac{	\mathrm{SNR}}{M }
			\hat{\lambda}\big) f_{\hat{\lambda}}(\hat{\lambda})\mathrm{d}\hat{\lambda},\label{capacity6}
		\end{align}
		where $ {s}=\min(K,q) $, $ \hat{\lambda} $ is an unordered eigenvalue
		of the random matrix $ \hat{\bX}_{2}\hat{\bL}\hat{\bX}_{2}^{\H} $, and $ f_{\hat{\lambda}}(\hat{\lambda}) $ is its PDF. Notably, there is no expression for  the distribution of   $ \hat{\lambda} $. We notice that \eqref{capacity6} resembles \cite[Eq. 11]{Jin2010} but includes a different form of the diagonal matrix, which can not result in our expressions as a special case.  Remarkably, \eqref{capacity6} provides the EC an IRS-assisted MIMO channel with correlation in terms of large-scale statistics. 
		
		For this reason, it is required  first to obtain a new exact closed-form expression for the joint PDF of $ 0\le c_{1}<\cdots<c_{q}< \infty $, where $ c_{i}=\gamma_{i}(\phi_{i})\hat{\lambda}_{i}^{2} $ are the elements of $ \hat{\bL} $.
		
		\begin{lemma}\label{JointPDFLemma}
			The joint pdf of  $ \hat{\bL} =\diag\left(c_{1}, \ldots,c_{q}\right) $, where  $ 0\le c_{1}<\cdots<c_{q}< \infty $, is given by
			\begin{align}
				f_{ \hat{\bL} }\left(c_{1}, \ldots,c_{q}\right)=	\mathcal{K}\prod_{i=1}^{q}\frac{c_{i}^{p-q}e^{-\frac{c_{i}}{\gamma_{i}}}}{\gamma_{i}^{p-q+1}}\prod_{i<j}^{q}\left(\frac{c_{j}}{\gamma_{j}}-\frac{c_{i}}{\gamma_{i}}\right)^{2},\label{JointPdf}
			\end{align}
			where $ \mathcal{K}=\left(\prod_{i=1}^{q}\Gamma\left(q-i+1\right)\Gamma\left(p-i+1\right)\right)^{-1} $ with  $ p=\max(K,N) $.
		\end{lemma}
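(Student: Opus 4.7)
The plan is to recognise that the randomness in $\hat{\bL}$ is carried entirely by the squared singular values of a standard complex Gaussian matrix, and then to derive \eqref{JointPdf} by a deterministic change of variables from the classical complex Wishart eigenvalue density.

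First, I would invoke the construction preceding \eqref{capacity3}: the bilateral unitary invariance of $\hat{\bX}_2$ absorbs the eigenvector matrices coming out of the SVD, so the only random objects feeding into $\hat{\bL}$ are the ordered squared singular values $\mu_i:=\lambda_i^{2}$ of the underlying $\mathcal{CN}(0,1)$ matrix whose dimensions are $(q,p)$ with $q=\min(K,N)$ and $p=\max(K,N)$. I would then quote the standard ordered-eigenvalue density of a complex Wishart matrix (see, e.g., \cite{Alfano2004}),
\begin{equation*}
f_\mu(\mu_1,\ldots,\mu_q)=\mathcal{K}\prod_{i=1}^{q}\mu_i^{p-q}e^{-\mu_i}\prod_{i<j}(\mu_j-\mu_i)^{2},
\end{equation*}
where $\mathcal{K}$ is precisely the normalising constant given in the lemma. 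This is the $\gamma_i\equiv 1$ version of the target formula.

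Next, since $\gamma_i(\phi_i)=p_i\alpha_i^{2}(\phi_i)\lambda_{\bT_1,i}\lambda_{\bR_1,i}\lambda_{\bT_2,i}$ is a deterministic positive scalar (depending only on the fixed phase shifts, power allocation, and eigenvalues of the correlation matrices), the coordinate-wise map $\mu_i\mapsto c_i=\gamma_i\mu_i$ is a smooth bijection with diagonal Jacobian of determinant $\prod_{i=1}^{q}\gamma_i^{-1}$. Substituting $\mu_i=c_i/\gamma_i$ into $f_\mu$, merging the resulting $\gamma_i^{-(p-q)}$ factor with the Jacobian factor $\gamma_i^{-1}$ into $\gamma_i^{-(p-q+1)}$, and rewriting the squared Vandermonde term as $\prod_{i<j}(c_j/\gamma_j-c_i/\gamma_i)^{2}$ reproduces the right-hand side of \eqref{JointPdf} verbatim.

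The only non-routine step is the first one: identifying the correct underlying complex Wishart law and, via the unitary-invariance argument, confirming that all eigenvectors decouple so that $\hat{\bL}$ genuinely inherits that density with the stated parameters $(q,p)$. A minor technical point I would take care of is the pairing convention between the ordered $\mu_i$'s and the $\gamma_i$'s so that the constraint $0\le c_1<\cdots<c_q$ is compatible with the image of $0\le\mu_1<\cdots<\mu_q$; with that established, the remainder is a one-line Jacobian computation.
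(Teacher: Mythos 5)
Your proposal matches the paper's own proof essentially step for step: the paper likewise starts from the classical joint eigenvalue density of the underlying complex Wishart matrix (citing James's result, with the same normalising constant $\mathcal{K}$ and parameters $p=\max(K,N)$, $q=\min(K,N)$) and then applies the coordinate-wise change of variables $w_i = c_i/\gamma_i$ with Jacobian $\prod_{i=1}^{q}\gamma_i^{-1}$ to arrive at \eqref{JointPdf}. The argument is correct and no genuinely different route is taken.
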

		\begin{proof}
			See Appendix~\ref{Lemma1}.
		\end{proof}
		
		Taking into account  Lemma \ref{JointPDFLemma}, we obtain $ f_{\hat{\lambda}}(\hat{\lambda}) $ according to the following theorem.

		\begin{theorem}\label{MarginalPDF}
			The marginal PDF of an unordered eigenvalue $ \hat{\lambda} $ of $ \bX_{1}^{\H}\tilde{\bL}\bX_{1} $ 	is given by 
		\end{theorem}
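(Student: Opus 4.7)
The plan is to derive the marginal PDF of an unordered eigenvalue by integrating the joint PDF of Lemma~\ref{JointPDFLemma} over $q-1$ of the $q$ ordered variables and averaging over which position is retained, since every unordered eigenvalue is equiprobable. That is, I would compute
\begin{align}
f_{\hat{\lambda}}(c)=\frac{1}{q}\sum_{k=1}^{q}\int f_{\hat{\bL}}(c_{1},\ldots,c_{k-1},c,c_{k+1},\ldots,c_{q})\prod_{i\ne k}dc_{i},\nonumber
\end{align}
where each of the inner integrals, after changing variables to $y_{i}=c_{i}/\gamma_{i}$, reduces to a product of moments of exponential weights.

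The central tool is the squared-Vandermonde identity $\prod_{i<j}(y_{j}-y_{i})^{2}=\det[y_{i}^{a-1}]_{i,a}\cdot\det[y_{i}^{b-1}]_{i,b}$, expanded via the Leibniz formula into a double sum over permutations $\sigma,\tau\in S_{q}$. Substituting this into the joint PDF turns each of the $q-1$ remaining integrals into a gamma integral $\int_{0}^{\infty}c^{a}e^{-c/\gamma_{i}}dc=\gamma_{i}^{a+1}\Gamma(a+1)$, which is evaluated in closed form. The surviving variable contributes a single factor $c^{\sigma(k)+\tau(k)-2+p-q}e^{-c/\gamma_{k}}$ times the weight $c^{p-q}/\gamma_{k}^{p-q+1}$, multiplied by the $\gamma$-dependent coefficients generated by the integrated rows and columns.

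The next step is to recombine the double sum over $(\sigma,\tau)$ into a pair of $q\times q$ determinants via the Cauchy--Binet identity, with one row (the $k$-th) singled out. After summing over $k=1,\ldots,q$ and normalising by $1/q$, the answer collapses into a compact expression that I would then rewrite in terms of generalized Laguerre polynomials $L_{n}^{p-q}(\cdot)$ by identifying the emerging polynomial kernel with a Christoffel--Darboux-type sum; this is the step that yields the stated closed form.

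The main obstacle is the presence of non-identical scales $\gamma_{i}$ in the exponential weights $e^{-c_{i}/\gamma_{i}}$, which breaks the exchangeability of the density and prevents a direct application of the classical Andr\'eief identity. Bookkeeping which $\gamma_{i}$ is paired with which integration variable is delicate: I would need to use Andr\'eief's identity in a Cauchy--Binet form accommodating heterogeneous single-variable measures, which effectively forces the final answer to be a sum of $q$ terms, one per scale $\gamma_{k}$, rather than a single symmetric Laguerre-kernel expression as in the classical Wishart case.
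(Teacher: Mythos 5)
Your proposal computes the wrong marginal. The quantity $\hat{\lambda}$ in Theorem~\ref{MarginalPDF} is an unordered eigenvalue of the matrix $\bX_{1}^{\H}\tilde{\bL}\bX_{1}$ (equivalently $\hat{\bX}_{2}\hat{\bL}\hat{\bX}_{2}^{\H}$), which involves a \emph{second, independent} Gaussian matrix on top of the random diagonal matrix whose joint density Lemma~\ref{JointPDFLemma} provides. Integrating the joint PDF \eqref{JointPdf} over $q-1$ of its arguments, as you propose, would yield the marginal density of an unordered diagonal entry $c_{i}$ of $\hat{\bL}$ itself --- a different random variable. A quick sanity check exposes the mismatch: your route can only produce Gamma-function moments (hence a Laguerre/Christoffel--Darboux-type kernel, exactly what you predict), whereas the stated result \eqref{TheoremLambda} contains modified Bessel functions $\mathrm{K}_{\nu}(2\sqrt{\hat{\lambda}/\gamma_{l}})$, which cannot arise from gamma integrals of polynomials against $e^{-c/\gamma}$ alone.

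The missing idea is a two-stage conditioning argument. The paper first invokes the known conditional unordered-eigenvalue density $f_{\hat{\lambda}\mid\tilde{\bL}}$ of the semi-correlated Wishart-type matrix $\bX_{1}^{\H}\tilde{\bL}\bX_{1}$ given $\tilde{\bL}=\diag[c_{1},\ldots,c_{q}]$ (from \cite[Lemma 1]{Jin2010}: a determinant with one special column of entries $e^{-\hat{\lambda}/c_{m}}c_{m}^{q-M-1}$, divided by the Vandermonde $\prod_{i<j}(c_{j}-c_{i})$), and only then averages over $\tilde{\bL}$ using Lemma~\ref{JointPDFLemma}. That average is indeed a determinant integral of Andreief/Cauchy--Binet type with heterogeneous scales $\gamma_{i}$, yielding a sum of cofactor terms indexed by $\gamma_{l}$ --- the one part of your plan that is on the right track --- but the Bessel functions appear precisely from the cross term $\int_{0}^{\infty}e^{-\hat{\lambda}/(\gamma_{l}x)}x^{\tilde{p}-1}e^{-x}\,\mathrm{d}x=2(\hat{\lambda}/\gamma_{l})^{\tilde{p}/2}\mathrm{K}_{\tilde{p}}(2\sqrt{\hat{\lambda}/\gamma_{l}})$, i.e., from integrating the conditional density's exponential in $\hat{\lambda}/c$ against the exponential weight of $\tilde{\bL}$. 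Without the conditional eigenvalue density as the starting point, no amount of Vandermonde or permutation bookkeeping applied to \eqref{JointPdf} alone can recover \eqref{TheoremLambda}.
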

		\begin{align}
			\!f_{\hat{\lambda}}(\hat{\lambda})\!=\!\frac{\mathcal{K}}{s\!\prod_{i<j}^{q} \!\left(\gamma_{j}\!-\!\gamma_{i}\right)}&\sum_{l=1}^{q}\sum_{k=q-s+1}^{q}\frac{\hat{\lambda}^{(M+p-2q+2k+l-3)/2}}{\Gamma\!\left(M\!-\!q\!+\!k\right)\gamma_{l}^{\frac{p-M+l-1}{2}}}\nn\\
			&\times\mathrm{K}_{p-M+l-1}(2\sqrt{\hat{\lambda}/\gamma_{l}}) G_{l,k},\label{TheoremLambda}
		\end{align}
		where $ G_{l,k} $ is the $ (l,k)th $ cofactor of a $ q \times q $ matrix whose  $ (m,n)th $ 	entry is $ \{\bG\}_{m,n}=\Gamma(p-q+m+n-1)$.
		\begin{proof}
			See Appendix~\ref{Theorem1}.
		\end{proof}
		\begin{remark}
		Theorem \ref{MarginalPDF} presents the PDF of an IRS-assisted MIMO channel with correlation in a simple closed-form expression. Based on this theorem,  the EC in \eqref{capacity6} depends on the number of antennas at  the transmitter, the receiver, and the number or IRS elements together with the eigenvalues of the correlation matrices. A notable dependence in \eqref{TheoremLambda} concerns  the phase-shift and amplitude  in terms of $ \al_{i}^{2}(\phi_{i}) $. 
		\end{remark}
		\subsection{RBM optimization}\label{3b}
		The EC  can be maximized  in terms of the phase shifts. Relying on the common assumption of infinite resolution phase shifters, we formulate the optimization problem as
		\begin{align}\begin{split}
				&\!\!\!(\mathcal{P}1)~\max_{\bPhi} ~~	\hat{C}\\
				&~~~~~~\mathrm{s.t}~-\pi \le \phi_{n}\le \pi,~~ n\!=\!1,\dots,q,
			\end{split}\label{Maximization} 
		\end{align}
		where $ \hat{C}$ is given by \eqref{capacity6}  based on  \eqref{capacity3} or \eqref{capacity5}. 
		
		The problem $ 	(\mathcal{P}1) $ concerns a non-convex maximization  with respect to $  \phi_{n} $.  Taking the expression of $ f_{\hat{\lambda}}(\hat{\lambda}) $ into account, the optimization of \eqref{capacity6} is basically a constrained maximization problem whose  solution could be given by means of the  projected gradient 	ascent until convergence to a 	stationary point  \cite{Papazafeiropoulos2021}. The convergence is guaranteed due to the power constraint.  Since the dependence on phase shifts is hidden only in $ f_{\hat{\lambda}}(\hat{\lambda}) $, below, we focus on their optimization. According to the algorithm, let $ \bs^{i} =[\phi_{1}^{i}, \ldots, \phi_{q}^{i}]^{\T}$ denote 	the  vector including the phases at step $ i $.  The next iteration point  results in the increase of $ f_{\hat{\lambda}}(\hat{\lambda}) $ towards to its convergence. Hence, we have
		\begin{align}
			\tilde{\bs}^{i+1}&=\bs^{i}+\mu \bq^{i},\label{sol1}\\
			\bs^{i+1}&=\exp\left(j \arg \left(\tilde{\bs}^{i+1}\right)\right),\label{sol2}
		\end{align}
		where $ \mu $ is the step size and $ \bq^{i} $ is the adopted ascent direction at step $ i $ with  $ [\bq^{i}]_{n}= \pdv{f_{\hat{\lambda}}(\hat{\lambda})}{\phi_{n}} $. This derivative is provided by  Lemma \ref{Prop:optimPhase} below. 	The solution is found  by formulating the projection problem $ \min_{|\phi_{n} |=1, n=1,\ldots,N}\|\bs-\tilde{\bs}\|^{2} $ based on \eqref{sol1} and \eqref{sol2} under the constraint in \eqref{Maximization}. Note that the suitable step size requires  computation at each iteration, which is achieved by means of the backtracking line search \cite{Boyd2004}.

		\begin{lemma}\label{Prop:optimPhase}
			The derivative of $ f_{\hat{\lambda}}(\hat{\lambda})$ with respect to $ \phi_{n} $ is provided by
			\begin{align}
				&\!\pdv{f_{\hat{\lambda}}(\hat{\lambda})}{\phi_{n}}\!=\!-\frac{ p_{n}\al_{n}(\phi_{n}) \lambda_{\bT_{1},n}\lambda_{\bR_{1},n} \lambda_{\bT_{2},n}\mathcal{K}}{s\!\prod_{i<j}^{q} \!\left(\gamma_{j}\!-\!\gamma_{i}\right)}(1-\kappa_{\min})\xi\nn\\
				&\times\cos(\phi_{n}-\vartheta )\left(\frac{\sin(\phi_{n}-\vartheta )+1}{2}\right)^{\!\!\xi-1}
				\big(\tr\big(\bV^{-1}\pdv{\bV}{\gamma_{n}}\big)\nn\\
				&\times\!\!\sum_{l=1}^{q}\sum_{k=q-s+1}^{q}\!\!\frac{\hat{\lambda}^{(M+p-2q+2k+l-3)/2}}{\Gamma\!\left(M\!-\!q\!+\!k\right)\!\gamma_{l}^{\frac{p\!-\!M\!+\!l\!-\!1}{2}}}\mathrm{K}_{p-M+l-1}(2\sqrt{\!\hat{\lambda}/\gamma_{l}}) G_{l,k}\nn\\
				&+	\sum_{k=q-s+1}^{q}\!\!\frac{\hat{\lambda}^{(2M+\bar{p}-2q+2k-2)/2}}{\Gamma\!\left(M\!-\!q\!+\!k\right)\!\gamma_{n}^{\frac{\bar{p}}{2}}}G_{n,k}\big(\frac{\bar{p}}{2\gamma_{n}}			
				\mathrm{K}_{\bar{p}}(2\sqrt{\!\hat{\lambda}/\gamma_{n}}) \nn\\
				&-\frac{1}{\gamma_{n}}\sqrt{\frac{\hat{\lambda}}{\gamma_{n}}}\big(\mathrm{K}_{\bar{p}-1}(2\sqrt{\!\hat{\lambda}/\gamma_{n}}) +\mathrm{K}_{\bar{p}+1}(2\sqrt{\!\hat{\lambda}/\gamma_{n}})\!\big)\!\big)\!\big)
				,\label{TheoremLambdaDerivative}
			\end{align}
			where $ \bar{p}=p-M+n-1 $, $ \bV $ is the Vandermonde matrix satisfying $ \det(\bV)=\prod_{i<j}^{q} \left(\gamma_{j}-\gamma_{i}\right)$, and $ \pdv{\bV}{\gamma_{n}} $ is a matrix whose elements are the derivatives of   $ \bV $ with respect to $ \gamma_{n} $.
		\end{lemma}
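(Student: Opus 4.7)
The plan is to apply the chain rule, $\pdv{f_{\hat{\lambda}}(\hat{\lambda})}{\phi_n} = \pdv{\gamma_n}{\phi_n}\,\pdv{f_{\hat{\lambda}}(\hat{\lambda})}{\gamma_n}$, since the phase shift $\phi_n$ enters \eqref{TheoremLambda} exclusively through $\gamma_n(\phi_n) = p_n\alpha_n^2(\phi_n)\lambda_{\bT_1,n}\lambda_{\bR_1,n}\lambda_{\bT_2,n}$. Computing the first factor is routine: differentiating the amplitude model \eqref{phaseShiftModel} gives a term proportional to $(1-\kappa_{\min})\xi\cos(\phi_n-\vartheta)\bigl(\tfrac{\sin(\phi_n-\vartheta)+1}{2}\bigr)^{\xi-1}$, and then the chain rule on $\gamma_n = p_n\alpha_n^2\lambda_{\bT_1,n}\lambda_{\bR_1,n}\lambda_{\bT_2,n}$ supplies an extra factor $2\alpha_n(\phi_n)$; this reproduces the leading prefactor of \eqref{TheoremLambdaDerivative}.

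The heart of the argument is therefore the computation of $\pdv{f_{\hat{\lambda}}}{\gamma_n}$. Inspection of \eqref{TheoremLambda} reveals that $\gamma_n$ appears in only two places: the Vandermonde-type denominator $\prod_{i<j}^q(\gamma_j - \gamma_i)$ shared by every summand, and the single $l = n$ term inside the double sum, where $\gamma_n$ enters through $\gamma_n^{-(p-M+n-1)/2}$ and through the argument of the modified Bessel function $\mathrm{K}_{p-M+n-1}(2\sqrt{\hat{\lambda}/\gamma_n})$. The cofactors $G_{l,k}$, the Gamma factors, and every $\gamma_l$ with $l\neq n$ are all $\gamma_n$-independent, so the quotient rule cleanly splits the derivative into these two contributions.

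For the denominator I would invoke the identity $\prod_{i<j}^q(\gamma_j-\gamma_i) = \det(\bV)$ stated in the lemma together with Jacobi's formula, $\pdv{\det(\bV)}{\gamma_n} = \det(\bV)\tr\bigl(\bV^{-1}\pdv{\bV}{\gamma_n}\bigr)$; this produces the $\tr(\bV^{-1}\pdv{\bV}{\gamma_n})$ factor multiplying the original double sum, which is precisely the first bracketed block of \eqref{TheoremLambdaDerivative}. For the $l=n$ summand I would apply the product rule to $\gamma_n^{-\bar p/2}\mathrm{K}_{\bar p}(2\sqrt{\hat{\lambda}/\gamma_n})$ with $\bar p = p-M+n-1$: the power-law factor contributes the $\bar p/(2\gamma_n)$ piece, while the Bessel factor requires the standard identity $\mathrm{K}_\nu'(z) = -\tfrac12[\mathrm{K}_{\nu-1}(z) + \mathrm{K}_{\nu+1}(z)]$ together with $\pdv{}{\gamma_n}(2\sqrt{\hat{\lambda}/\gamma_n}) = -\gamma_n^{-1}\sqrt{\hat{\lambda}/\gamma_n}$; the exponent rewrites as $(2M+\bar p -2q+2k-2)/2$ using $\bar p + M = p+n-1$. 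Reassembling the two contributions, factoring out the common prefactor $\mathcal{K}/(s\prod_{i<j}(\gamma_j-\gamma_i))$, and absorbing the sign of $\pdv{\gamma_n}{\phi_n}$ delivers \eqref{TheoremLambdaDerivative}. The only difficulty I anticipate is bookkeeping—distinguishing the summands that depend on $\gamma_n$ only through $\det(\bV)$ from the isolated $l=n$ term that carries the additional Bessel dependence, and keeping the Vandermonde derivative in its compact determinantal form rather than expanding it entrywise—rather than any substantive analytic obstacle.
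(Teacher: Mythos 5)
Your proposal is correct and follows essentially the same route as the paper's own (very terse) proof: the chain rule through $\gamma_n(\phi_n)$ via the derivative of the amplitude model \eqref{phaseShiftModel}, Jacobi's formula for the derivative of the Vandermonde determinant (yielding the $\tr\big(\bV^{-1}\pdv{\bV}{\gamma_{n}}\big)$ factor), and the identity $\mathrm{K}_\nu'(z)=-\tfrac12[\mathrm{K}_{\nu-1}(z)+\mathrm{K}_{\nu+1}(z)]$ for the isolated $l=n$ summand. You simply make explicit the bookkeeping that the paper leaves as ``simple algebraic manipulations,'' so no substantive difference exists between the two arguments.
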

		
		\begin{proof}The computation of the derivative is straightforward since  $ f_{\hat{\lambda}}(\hat{\lambda})$ is in closed-form. We notice that the dependence on the phase shifts is inside $ \gamma_{l} $ in terms of $ \al^{2}_{l}(\phi_{l}) $. Hence, the derivative is obtained as in \eqref{TheoremLambdaDerivative} by using the derivative of \eqref{phaseShiftModel}, \cite[Eq. 46]{Petersen2012}, which gives the derivative of a determinant, and   \cite[Eq. 03.04.20.0014.02]{Wolfram1}, which gives the derivative of $ K_v(\cdot) $ after some simple algebraic manipulations.
		\end{proof}

		\section{Numerical Results}\label{Numerical} 
		We consider  uniform linear arrays (ULAs) for  the configuration of both the transmitter and the receiver, while the IRS consists of a uniform planar array (UPA).  Based on  the 3GPP Urban Micro (UMi) scenario from TR36.814 for a carrier frequency of $ 2.5 $ GHz and noise level $ -80 $ dBm,  the path losses for $ \bH_{1} $ and $ \bH_{2} $ are generated based on the NLOS  version \cite{3GPP2017}. Specifically, therein, the overall path loss for the IRS-assisted
		link is $ 	\beta=	\beta_{1}	\beta_{2} $, where 		$ \beta_{i}=C_{i} d_{i}^{-\nu_{i}},~i=1,2 $		with $ C_{1}=26 $ dB, $ C_{2}=28 $ dB, $ \nu_{1} =2.2$, $ \nu_{2} =3.67$. The variables $ d_{1}=8~\mathrm{m} $ and $ d_{2} =60~\mathrm{m}$ express the distances between the transmitter and the IRS, and the IRS and the receiver, respectively. As can be seen, since the IRS is closer to the transmitter, the path-loss exponent of this link is lower because fewer obstacles are expected. We use $ 5 $ dBi antennas at the transmitter ($ M=4 $) and the receiver ($ K=4 $), and their   correlation matrices  are given by \cite{Papazafeiropoulos2021}. Unless otherwise stated, we consider the following values. The correlation matrix for the  IRS is given by \cite{Bjoernson2020}, where 
		$ d_{\mathrm{H}}\!=\!d_{\mathrm{V}}\!=\!\lambda/4 $.  The phase shift model parameters are $ \kappa_{\min}=0.8 $, $ \xi=1.6 $, and $  \vartheta=0.43\pi $ \cite{Abeywickrama2020}.
		The figures correspond to  $ \bLambda_{\bR_{2}} =\Id_{N}$, while similar observations can be extracted in the case  $ \bLambda_{\bT_{1}} \bP=\Id_{M} $ since we do not focus on the transceiver design but on the impact of the IRS.  Furthermore,  Monte-Carlo (MC) simulations  coincide with the analytical results in all cases, which corroborates our analysis.
		
		\begin{figure}[!h]
			\begin{center}
				\includegraphics[width=0.85\linewidth]{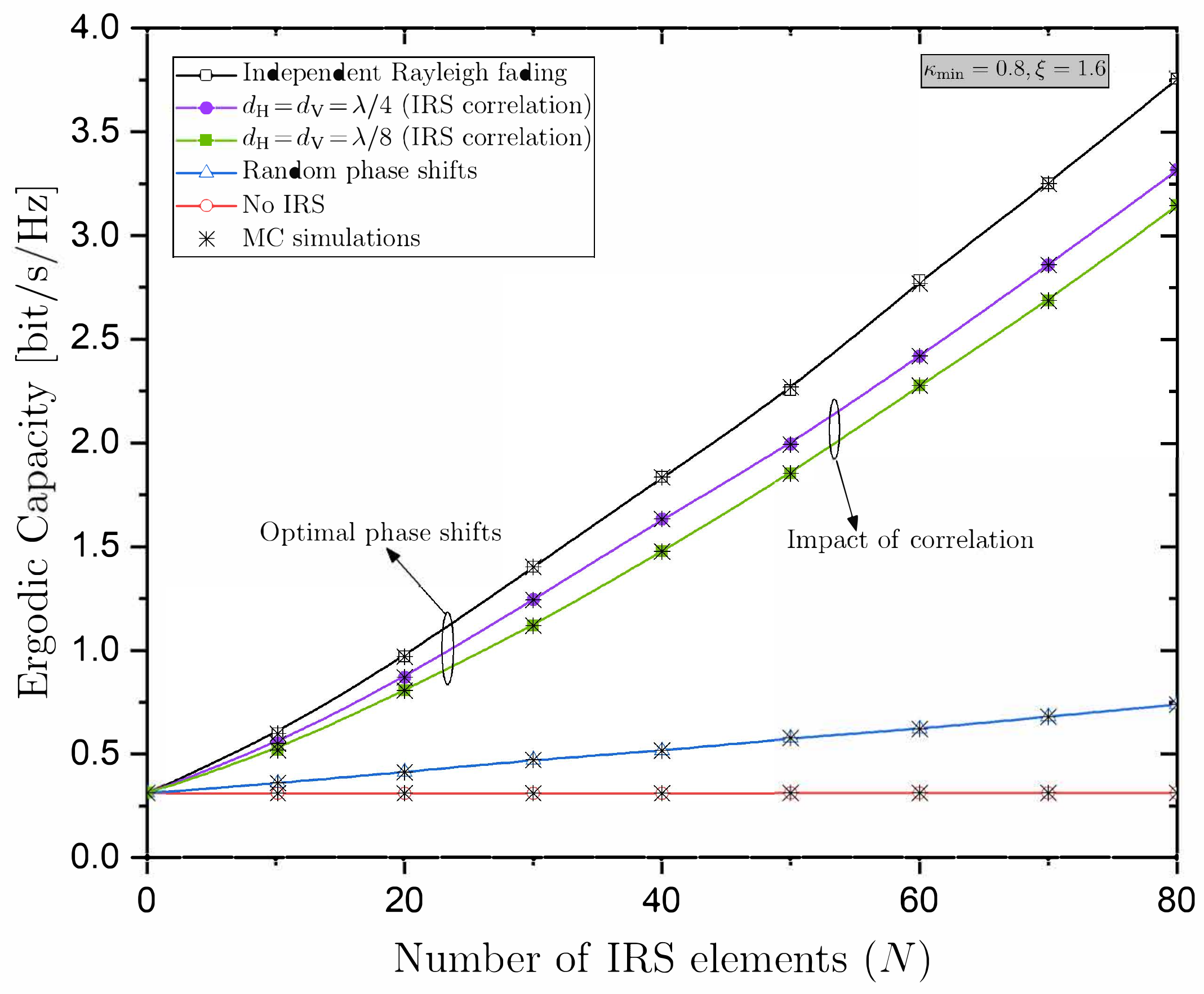}
				\caption{\footnotesize{EC  of an IRS-assisted MIMO system versus the number of IRS elements $ N $ for varying IRS correlation ( $ \kappa_{\min}=0.8 $, $ \xi=1.6 $). }}
				\label{Fig1}
			\end{center}
		\end{figure} 	
		
		\begin{figure}[!h]
			\begin{center}
				\includegraphics[width=0.85\linewidth]{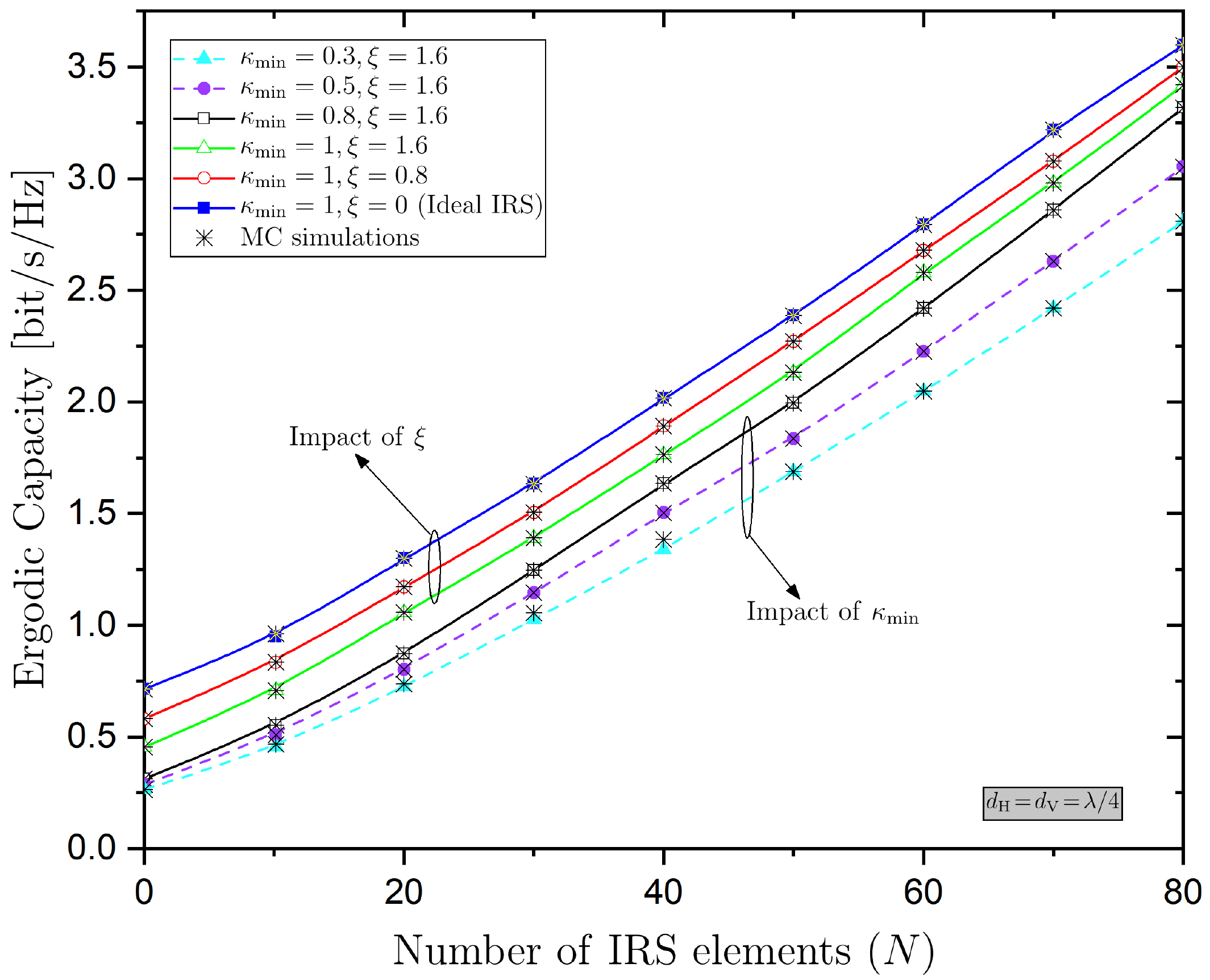}
				\caption{\footnotesize{EC  of an IRS-assisted MIMO system versus the number of IRS elements $ N $ for varying phase-shift model parameters  ( $ \mathrm{d_{\mathrm{H}}=d_{\mathrm{V}}=\lambda/8} $).}}
				\label{Fig2}
			\end{center}
		\end{figure} 	
		Fig. \ref{Fig1} illustrates the EC versus the number of IRS elements. In general, we observe that $ 	\hat{C} $ increases with the number of IRS elements. Moreover, given  specific values for the phase-shift model, we study the impact of the  IRS correlation.  We observe that the unrealistic independent Rayleigh fading gives the highest $ 	\hat{C} $, while the decrease of the distance between the IRS elements (higher correlation) results in a lower capacity. Also, we depict the scenarios with no IRS and with random phase shifts, i.e., not optimized. The former corresponds to a horizontal line as $ N $ increases since the capacity is independent of $N  $, while the latter shows that the phase-shift optimization is clearly very beneficial during the IRS implementation.   
		
		In Fig.  \ref{Fig2}, we depict the EC  for different phase-shift settings. The ``dashed'' and ``solid'' lines correspond to variation of $ \kappa_{\min}$ and $ \xi $, respectively. For the sake of comparison, we have also shown the case of ideal phase-shift model ( $ \kappa_{\min} =1$, $ \xi=0 $). We observe that when  $ \kappa_{\min}$ increases,  EC increases. On the contrary, when  $ \xi $ increases, $ 	\hat{C} $ decreases. Notable, these observations coincide with the results in \cite{Zhang2021} for SISO-IRS assisted channels.

		\section{Conclusion} \label{Conclusion} 
		In this paper, we derived the  PDF  and the EC of a point-to-point  MIMO system assisted by an IRS under the unavoidable realistic conditions of correlated Rayleigh fading and correlation between the amplitude and the phase-shift of each IRS element. Also, we provided the maximization of the EC with  respect to the phase shifts of the IRS elements. This maximization is advantageous since it is accompanied by reduced overhead and computational complexity by exploiting the dependence of the EC by just large-scale statistics.  The analytical results are verified by MC simulations and show how the various system parameters affect the EC. For example, correlated fading results in performance degradation.  Future works on EC  should  account  for Rician fading conditions, and possibly, the existence of a direct path between the transmitter and the receiver.
		
		\begin{appendices}
			\section{Proof of Lemma~\ref{JointPDFLemma}}\label{Lemma1}
			The joint PDF of $ \bW=\diag\left(w_{1}, \ldots,w_{q}\right) $, where $ w_{i}=\hat{\lambda}_{i}^{2} $ and $ p=\max(K,N) $, is given by \cite{James1964} as
			\begin{align}
				\!\!\!\!	f_{\bW}\!\left(w_{1}, \ldots,w_{q}\right)\!=\!\mathcal{K}e^{-\sum_{i=1}^{q}\!w_{i}}\!\prod_{i=1}^{q}\!w_{i}^{p-q}\prod_{i<j}^{q}\!\!\left(w_{j}-w_{i}\right)^{2}\!. \label{JointPDFLemma1}
			\end{align}
			
			Taking into account that $\hat{\lambda}_{i}^{2}= c_{i}/\gamma_{i}(\phi_{i}) $, we obtain the joint PDF of $  \hat{\bL}=\diag\left(c_{1}, \ldots,c_{q}\right) $ after applying a vector transformation to \eqref{JointPDFLemma1} as
			\begin{align}
				f_{  \hat{\bL}}\!\left(c_{1}, \ldots,c_{q}\right)&=	f_{ \bW}\!\left(\frac{c_{1}}{\gamma_1}, \ldots,\frac{c_{q}}{\gamma_q}\right)\nn\\
				&\times|\bJ\left( \left(w_{1}, \ldots,w_{q}\right)\to \left(c_{1}, \ldots,c_{q}\right)\right)|,\label{JointPDFLemma2}
			\end{align}
			where the Jacobian transformation  is evaluated as
			\begin{align}
				|\bJ\left( \left(w_{1}, \ldots,w_{q}\right)\to \left(c_{1}, \ldots,c_{q}\right)\right)|=\prod_{i=1}^{q}\frac{1}{\gamma_{i}}.\label{JointPDFLemma3}
			\end{align}
			
			Substitution of \eqref{JointPDFLemma1} and \eqref{JointPDFLemma3} into \eqref{JointPDFLemma2} gives the joint PDF of $  \hat{\bL} $ in \eqref{JointPdf}.
			\section{Proof of Theorem~\ref{MarginalPDF}}\label{Theorem1}
			
			The unconditional PDF of an unordered eigenvalue of $ \bX_{1}^{\H}\tilde{\bL}\bX_{1} $  is obtained by taking  the expectation over $ \tilde{\bL} $, i.e., $ \EE_{\tilde{\bL}}[f_{\hat{\lambda}|\tilde{\bL}}(\hat{\lambda})] $. For this reason, we employ the  conditional unordered eigenvalue PDF from \cite[Lemma 1]{Jin2010} after expressing it as
			\begin{align}
				\!\!\!\!	f_{\hat{\lambda}|\tilde{\bL}}(\hat{\lambda})\!=\!\frac{1}{s\prod_{i<j}^{q}\!\left(c_{j}\!-\!c_{i}\right)}\!\!\sum_{k=q-s+1}^{q}\!\!\frac{\hat{\lambda}^{M-q+k-1}}{\Gamma\!\left(M\!-\!q\!+\!k\right)}\det\!\left(\bD_{k}\right)\!,
			\end{align}
			where $ \bD_{k} $ is a $ q\times q $ matrix with entries
			\begin{align}
				\{\bD_{k}\}_{m,n}=\begin{cases}c_{m}^{n-1},& ~~~n \ne k, \\
					e^{-\hat{\lambda}/c_{m}}c_{m}^{q-M-1},& ~~~n= k. \end{cases}
			\end{align}
			The unconditional PDF is derived as
			\begin{align}
				f_{\hat{\lambda}}(\hat{\lambda})&=\frac{\mathcal{K}}{s}\sum_{k=q-s+1}^{q}\frac{\hat{\lambda}^{M-q+k-1}}{\Gamma\left(M-q+k\right)}\mathcal{I}_{k},\label{fl}
			\end{align}
			where 
			\begin{align}
				\mathcal{I}_{k}&\!=\!\!\!\displaystyle\int_{0\le c_{1}<\cdots<c_{q}< \infty}\!\!\!\!\!\!\!\!\!\!\!\!\!\!\det\!\left(\bD_{k}\right)\!	\prod_{l=1}^{q}\!\frac{c_{l}^{p-q}e^{-\frac{c_{l}}{\gamma_{l}}}}{\gamma_{l}^{p-q+1}}\!\prod_{i<j}^{q}\frac{\!\left(\frac{c_{j}}{\gamma_{j}}\!-\!\frac{c_{i}}{\gamma_{i}}\!\right)^{2}}{\left(c_{j}-c_{i}\right)}\mathrm{d}c_{1}\cdots\mathrm{d}c_{q}.\nn
			\end{align}
			
			Meanwhile, according to the matrix theory, it holds that $ \prod_{i<j}^{q} \left(\theta_{j}-\theta_{i}\right)=\det(\bX)$, where  $ \bX $ is a Vandermonde matrix
			with  entry given by $ \theta_{i}^{j-1} $. Based on this property, we obtain 
			\begin{align}
				\mathcal{I}_{k}=\frac{1}{ \prod_{i<j}^{q} \left(\gamma_{j}-\gamma_{i}\right)}\det\left(\bY_{k}\right),
			\end{align}
			where $ \bY_{k} $ is a $ q \times q $ matrix with entries
			\begin{align}\label{Yk}
				\!\!\!\!\{\bY_{k}\}_{m,n}=\!\begin{cases} \int_{0}^{\infty} x^{p-q+m+n-2}e^{-x}\mathrm{d}x,&\!\!\!\!\! ~~~n \ne k, \\
					\int_{0}^{\infty}	e^{-\hat{\lambda}/\gamma_{l}x}x^{p-M+m-2}e^{-x}\mathrm{d}x,&\!\! \!\!\!~~~n= k. \end{cases}	
			\end{align}
			Note that in \eqref{Yk}, we have made a change of variables, i.e., we have set $ x=c_{l}/\gamma_{l} $. The  integrals are evaluated based on \cite[Eq.3.381.4]{Gradshteyn2007} and \cite[Eq.3.471.9]{Gradshteyn2007} as
			\begin{align}
				\int_{0}^{\infty} x^{p-q+m+n-2}e^{-x}\mathrm{d}x&=\Gamma(p-q+m+n-1),\label{int1}\\
				\int_{0}^{\infty}	e^{-\hat{\lambda}/\gamma_{l}x}x^{\tilde{p}-1}e^{-x}\mathrm{d}x&=2 \left(\frac{\hat{\lambda}}{\gamma_{l}}\right)^{\!\!\frac{\tilde{p}}{2}}\mathrm{K}_{\tilde{p}}(2\sqrt{\hat{\lambda}/\gamma_{l}}),\label{int2}
			\end{align}
						where $ \tilde{p}=p-M+m-1 $, $ K_v(\cdot) $ is
			the modified Bessel function of the second kind \cite[Eq. 8.432.6]{Gradshteyn2007}. Substitution of \eqref{int1} and \eqref{int2} into \eqref{Yk}  together with the application of Laplace's expansion provides the  desired result.
		\end{appendices}
		\bibliographystyle{IEEEtran}
		
		\bibliography{IEEEabrv,mybib}
	\end{document}